\newtheorem{mylem}{Lemma}
\newtheorem{mythe}{Theorem}
\newcommand{\Tr}{\mathrm{Tr}}
\newcommand{\Lt}{\mathcal{L}}
\newcommand{\hilbert}{\mathscr{H}}
\newcommand{\ketbra}[2]{\ket{#1}\!\bra{#2}}
\newcommand{\Abs}[1]{\left|#1\right|}
\newcommand{\im}{\mathrm{i}}
\begin{document}


\title{Initial Correlations in Open Quantum Systems: Constructing Linear Dynamical Maps and Master Equations}


\author{Alessandra Colla}

\affiliation{Institute of Physics, University of Freiburg, 
Hermann-Herder-Stra{\ss}e 3, D-79104 Freiburg, Germany}

\author{Niklas Neubrand}

\affiliation{Institute of Physics, University of Freiburg, 
Hermann-Herder-Stra{\ss}e 3, D-79104 Freiburg, Germany}

\author{Heinz-Peter Breuer}

\affiliation{Institute of Physics, University of Freiburg, 
Hermann-Herder-Stra{\ss}e 3, D-79104 Freiburg, Germany}

\affiliation{EUCOR Centre for Quantum Science and Quantum Computing,
University of Freiburg, Hermann-Herder-Stra{\ss}e 3, D-79104 Freiburg, Germany}

\begin{abstract}
We investigate the dynamics of open quantum systems which are initially correlated with their environment. The strategy of our approach is to analyze how given, fixed initial correlations modify the evolution of the open system with respect to the corresponding uncorrelated dynamical behavior with the same fixed initial environmental state,
described by a completely positive dynamical map. We show that, for any predetermined initial correlations, one can introduce a linear dynamical map on the space of operators of the open system which acts like the proper dynamical map on the set of physical states and represents its unique linear extension. Furthermore, we demonstrate that this construction leads to a linear, time-local quantum master equation with generalized Lindblad structure involving time-dependent, possibly negative transition rates. Thus, the general non-Markovian dynamics of an open quantum system can be described by means of a time-local master equation even in the case of arbitrary, fixed initial system-environment correlations. We present some illustrative examples and explain the relation of our approach to several other approaches proposed in the literature.
\end{abstract}

\date{\today}

\maketitle

\section{Introduction} 
The theory of open quantum systems has since its conception gained more and more traction as a powerful tool for studying quantum systems \cite{Breuer2007}. Its formulation for the case of systems weakly coupled to Markovian baths has been extended to arbitrary coupling and non-Markovian behavior with numerous techniques, which brought to the formulation of popular exact master equations such as the Nakajima-Zwanzig equation \cite{Nakajima1958,Zwanzig1960} and the time-convolutionless (TCL) master equation \cite{Shibata1977,Chaturvedi1979}. An often made assumption in many approaches is that of factorizing initial condition between the system and the environment. 
Such a requirement has been highly criticized, either on the grounds of it being unphysical \cite{Munro1996,Kampen2004,Kampen2005}, or just too restrictive, and the question of initial correlation has become of importance \cite{Pechukas1994,Alicki1995,Pechukas1995}. For example, the treatment of the dynamics of open systems including correlations in the initial state leads to a general method for the local detection of correlations between the system and an inaccessible environment \cite{Laine2010b,Gessner2011a} which has been realized experimentally both in trapped ion and in photonic systems \cite{Gessner2014a, Cialdi2014,Gessner2019}. Several efforts have thus been made to extend the theory to allow for correlated initial states, both in terms of examining conditions for which the resulting dynamical map is completely positive \cite{Shabani2009,Brodutch2013,Liu2014,Buscemi2014,Shabani2016,Vacchini2016,Schmid2019} and of finding alternative methods for the dynamical description of the reduced system \cite{Modi2012,Ringbauer2015,PazSilva2019,Alipour2020,Trevisan2021}. The approaches are numerous and varied, and the wide array of papers on the subject paints the picture of a complicated subject matter.  Considering initial correlations can mean very different things, depending on what one assumes of the initial system-environment state: should the environment be in a fixed, specific initial state, or should it depend on the correlations? Are the system and the environment initially in a separable, classically correlated, entangled state? Different choices lead to different reduced dynamics, and may induce limitations such as loss of complete positivity or a restriction of the set of initial system states that can be studied. Trying to avoid these choices to allow for any initial total states might instead result in other losses, like that of a unique dynamical map \cite{PazSilva2019}. Ultimately, the decision should be made based on which questions one is trying to answer or which advantage one wants to gain, carefully taking into account the consequent drawbacks.

A possible limitation of initial correlations between the system and the environment is that they may impose non-linearity of the equations describing the evolution of the reduced system, the dynamical map and the master equation, depending on the initial conditions chosen -- for example, such non-linearity may appear in the shape of additional inhomogeneous terms \cite{Stelmachovic2001,Zhang2015}.  However, dynamical maps in the case of entangled initial state have been extended to linear maps on matrices by performing a basis transformation \cite{Jordan2004,Jordan2005}, showing that such a linearized map is, as a consequence, no longer completely positive. Some groups, primarily Dominy, Shabani and Lidar (DSL) \cite{Dominy2015,Dominy2016}, have then more rigorously approached the subject of initial correlations by developing general frameworks for which the linearity of the dynamical map can be preserved or obtained. The linearity property, in fact, comes with several advantages, such as the operator-sum representation due to Choi \cite{Choi1975} of linear and Hermiticity preserving superoperators and the subsequent criterion for such a superoperator to be completely positive, the renowned Kraus representation \cite{Kraus1983}. Moreover, a dynamical map which is linear -- even if not completely positive (CP) -- automatically leads, through its inverse, to an exact generator of the evolution which can be put into a time-local generalized Lindblad form with time-dependent, possibly negative rates \cite{Breuer2012,Hall2014}. Master equations of this form have a wide range of applications, e.g. in non-Markovian stochastic unravelings \cite{Breuer1999b,Breuer2004,Piilo2008a,Smirne2020} and recently proposed formulations of quantum thermodynamics \cite{Colla2022}, and make a strong case for why finding a linear dynamical map should be preferred.  But there is a trade-off: in order to maintain linearity, these frameworks have to sacrifice the universality of states studied, both in the sense of what class of total initial states are considered, and of how many initial reduced states can be described via the same evolution.  While the DSL framework has been proven to be the most general for the study of initial correlations while maintaining linearity \cite{Sargolzahi2020},  in its generality one might find it difficult to get a feeling for the operational procedure involved in practice, and for how big are the sacrifices made for the safeguard of linearity.

In this work, we focus on a view of initial correlations which is not intended to keep track of the \textit{kind} of correlations initially present, but rather focuses on the difference between uncorrelated and correlated initial states, by looking at how the evolution depends on the initial correlation operator. While this has been seen in other approaches to give rise to an affine dynamical map \cite{Stelmachovic2001,Zhang2015}, this context can be easily embedded into a formalism which is in spirit analogous to the DSL approach. In fact, as we shall see through an explicit prescription, it is always possible to construct a linear dynamical map for the reduced system, as a consequence of the fact that the dynamical map itself acts linearly on the set of density matrices (it does so even on the set of trace 1 matrices) and can thus be extended to a linear map on the set of all operators. The resulting evolution is then comprised of the usual ``uncorrelated'',  completely positive and trace preserving (CPT) dynamical map plus an extra part which depends on the initial correlations. From this we can construct the associated master equation, which is once again linear, and exists any time the inverse of the original uncorrelated dynamical map exists. Thanks to the linearity of the master equation, this can be put in generalized Lindblad form, with initial correlations contributions appearing only in the dissipator. In our view, our results show how initial correlations do not represent such an added conceptual challenge with respect to the uncorrelated initial state, contrary to what is often held, the only drawbacks being the possible loss of complete positivity and the restriction on the initial reduced states that can be studied with the same dynamical map. 

The structure of the paper is the following: in Sec.~\ref{sec:affine} we review how the assumption of a fixed initial correlation operator leads to an affine dynamical map, and formally define the domain of validity of such map. In Sec.~\ref{sec:linear} we explicitly construct the unique linear extension for the dynamical map and establish a formal criterion for complete positivity. In Sec.~\ref{sec:me} we deduce the linear master equation accounting for initial correlations, and study the structural change with respect to the uncorrelated case.  Sec.~\ref{sec:example} contains an application of the proposed approach to the Jaynes-Cummings model. We make concluding remarks in Sec.~\ref{sec:concl}. 

\section{Initial correlations and affine dynamical maps}\label{sec:affine}
For an uncorrelated initial state of the total system $\rho_{SE}(0)= \rho_S(0)\otimes \rho_E$ the dynamical map $\Phi_t$ describing the evolution of the reduced system is determined by two factors only:
\begin{enumerate}
\item the unitary evolution for the total system $U_t$
\item the initial environmental state $ \rho_E$
\end{enumerate} as one can see directly from the definition of the density matrix of the reduced system
\begin{equation}
\rho_S(t) \overset{\chi=0}{=} \Phi_t[\rho_S(0)]= \Tr_E \{ U_t \rho_S(0)\otimes \rho_E U_t^{\dagger} \}\;,
\end{equation}
where the superscript $\chi=0$ denotes the absence of initial correlations. Throughout the paper, we assume the Hilbert space of the reduced system to be finite dimensional.
The map $\Phi_t$ is linear, trace preserving and always completely positive (CPT), thus admitting a Kraus representation
\begin{equation}\label{CPKraus}
\Phi_t[\rho_S(0)]= \sum_i \Omega_i (t) \rho_S(0) \Omega_i^{\dagger}(t) \;,
\end{equation}
through the set of time dependent operators $\Omega_i$ satisfying $\sum_i \Omega_i^{\dagger}\Omega_i = \mathbb{I}$ at all times. For a correlated initial state $\rho_{SE}(0) \neq \rho_S(0)\otimes \rho_E$ most of these results fail. However,  one can analogously study the general case and compare it to its uncorrelated counterpart by dividing any initial total system state into its corresponding uncorrelated state and a correlation operator $\chi$:
\begin{equation}\label{initialstate}
\rho_{SE}(0)= \rho_S(0)\otimes \rho_E + \chi \;,
\end{equation}
where $ \rho_S(0)= \Tr_E\{\rho_{SE}(0)\}$,  $ \rho_E = \Tr_S \{\rho_{SE}(0)\}$ are the respective reduced states and $\chi$ has the property of being Hermitian and of yielding the null operator for both partial traces,  $\Tr_E\{\chi\}=0$, $\Tr_S\{\chi\}=0$. Now, the dynamical map depends additionally on the initial correlation operator through an extra term:
\begin{equation}\label{affinedynamicalmap}
\Phi^{\chi}_t[\rho_S(0)]= \Phi_t[\rho_S(0)] + I_t^{\chi}\;,
\end{equation}
with $I_t^{\chi} = \Tr_E \{ U_t \chi U_t^{\dagger} \}$.  In the special case where $\chi$ commutes with the Hamiltonian generating the unitary evolution $U_t$, namely when the correlation operator is left invariant by the evolution, the dynamical map above is identical to the uncorrelated $\Phi_t$.

Our strategy in order to deal with correlated initial states is to determine the reduced dynamics through a dynamical map determined by
\begin{enumerate}
\item the unitary evolution for the total system $U_t$
\item the initial environmental state $ \rho_E$
\item the correlation operator $\chi$ ,
\end{enumerate}
effectively regarding the initial correlation operator as a parameter to be taken independently of the reduced system state, analogously to what is done to $\rho_E$ in the uncorrelated case. Note that these assumptions are also at the core of \cite{Stelmachovic2001}, where the parameters describing the environment and the correlations at time zero should be determined (fixed). Indeed one can always do so; however, the set of initial reduced states whose evolution can be adequately described is then limited to the set of ``physical'' states $\mathcal{P}_E^{\chi}(\hilbert_S)$, dependent on $\rho_E$ and $\chi$, for which the total operator \eqref{initialstate} is still a proper state of the total Hilbert space. A general and explicit characterization of this set is not straightforward, as it heavily depends on the interplay between the chosen environmental state and the correlation operator. Nonetheless, a formal definition for $\mathcal{P}^{\chi}_E(\hilbert_S)$ can be given by introducing an assignment map \cite{Pechukas1994}. This map is designed to map a reduced state into a unique system-environment state. For our case, the assignment map depends on the choice of initial environmental state and correlations:
\begin{eqnarray}
\alpha_E^{\chi}& : &\mathcal{B}(\hilbert_S) \longrightarrow \mathcal{B}(\hilbert_{SE}) \\
& &X \longmapsto X\otimes \rho_E + \chi \; .
\end{eqnarray}
Using the notation $\mathcal{S}(\hilbert_{SE})$ to denote the convex set of states of the full Hilbert space, the physical domain $\mathcal{P}^{\chi}_E(\hilbert_S)$ can be identified as the preimage of $\mathcal{S}(\hilbert_{SE})$ under the assignment map $\alpha_E^{\chi}$, i.e.
\begin{equation}
\mathcal{P}^{\chi}_E(\hilbert_S) = {\left(\alpha_E^{\chi}\right) }^{-1} \left[ \mathcal{S}(\hilbert_{SE}) \right] \;.
\end{equation}
In more explicit terms, the physical domain corresponds to the elements of $\mathcal{S}(\hilbert_{S})$ for which the corresponding total operator is positive, i.e. any state $\rho_S \in \mathcal{S}(\hilbert_{S})$ satisfying the condition
\begin{equation}
\rho_S \otimes \rho_E + \chi \geq 0 \; .
\end{equation}
Naturally, for $\chi=0$ one recovers $\mathcal{P}^{\chi=0}_E(\hilbert_S)= \mathcal{S}(\hilbert_{S})$. 

Once $\chi$ is set, then the dynamical map \eqref{affinedynamicalmap} becomes an affine map on the space of bounded operators $\mathcal{B}(\hilbert_S)$, with a CPT linear component and a traceless offset term. It is thus still trace-preserving, but not necessarily completely positive.  While it is true that formally $\Phi^{\chi}_t$ is well defined on all $\mathcal{B}(\hilbert_S)$, it only assumes physical meaning when acting on elements of $\mathcal{P}_E^{\chi}(\hilbert_S)$. As we will see in the next section, this allows us to replace \eqref{affinedynamicalmap} with an equivalent linear map defined on all bounded operators and that acts as the proper dynamical map on $\mathcal{P}_E^{\chi}(\hilbert_S)$. 

\section{Linear dynamical map for initial correlations}\label{sec:linear}
Let us define the set of all bounded operators of the reduced Hilbert space that have trace one, $\mathcal{A}^1(\hilbert_S) := \{ X \in \mathcal{B}(\hilbert_S)\; | \;\Tr \{X\}=1\}$. This is an affine subspace of $\mathcal{B}(\hilbert_S)$, since all affine combinations (combinations with coefficients $\lambda_i \in \mathbb{C}$ such that $\sum_i \lambda_i =1$) of trace one operators $X_i \in \mathcal{A}^1(\hilbert_S)$ are still of trace one:
\begin{equation}
\Tr \Big\{ \sum_i \lambda_i X_i \Big\} = \sum_i \lambda_i =1 \; .
\end{equation}
We recall that an affine map can also be defined when acting on an affine space as a map that is linear under affine combinations; i.e., a map $f$ acting on an affine space $\mathcal{A}$ is affine if and only if, for any set of elements $A_i \in \mathcal{A}$ and any set of coefficients $\lambda_i$ such that $\sum_i \lambda_i = 1$, it follows that
\begin{equation}
f \Big( \sum_i \lambda_i A_i \Big) = \sum_i \lambda_i f(A_i) \; .
\end{equation}
It is known that such an affine map can be uniquely extended to a linear map on the smallest linear space containing $\mathcal{A}$, denoted by $\mathrm{Span}(\mathcal{A})$. It follows from this reasoning that the dynamical map \eqref{affinedynamicalmap}, which acts linearly over all trace one operators $\mathcal{A}^1(\hilbert_S)$, can be uniquely extended to a linear map on $\mathrm{Span}(\mathcal{A}^1(\hilbert_S)) = \mathcal{B}(\hilbert_S)$, the full space of bounded operators for the reduced system. These are straightforward mathematical results, whose proofs we nonetheless report for completeness in Appendix \ref{app:proofs}, both in abstract terms and for the specific case studied.
The important point is that the dynamical map for initial correlations \eqref{affinedynamicalmap} can be substituted with a unique equivalent linear map on all bounded operators that still describes the proper evolution of all relevant states. One can easily check that the following map
\begin{equation}\label{lineardynamicalmap}
\Psi^{\chi}_t[X]= \Phi_t[X] + I_t^{\chi} \Tr\{X\}
\end{equation}
has the wanted properties of being linear and extending \eqref{affinedynamicalmap} to all $X \in \mathcal{B}(\hilbert_S)$, and must therefore be its unique linear extension.

Let us examine its features.
Like in the affine case, the tracelessness of $I_t^{\chi}$ guarantees that the map is trace preserving. In general, with respect to the previous uncorrelated version, it instead loses the property of complete positivity and possibly even positivity. Still, the map evolves any element of the physical domain to a proper state of the reduced system, $\Psi^{\chi}_t[\mathcal{P}_E^{\chi}(\hilbert_S)] \subset \mathcal{S}(\hilbert_S)$, and it is written as a sum of a completely positive, correlation independent part and a term depending on initial correlations. Since this second term is linear and Hermiticity preserving, it admits a pseudo-Kraus representation of the following form \cite{Choi1975}:
\begin{equation}\label{pKrausChi}
 I_t^{\chi} \Tr\{\rho_S(0)\} = \sum_i f_i(t) F_i (t) \rho_S(0) F_i^{\dagger}(t) \; ,
\end{equation}
with the extra condition $\sum_i f_i(t)F_i^{\dagger}(t)  F_i (t) =0$ and where the coefficients $f_i(t)$ can be negative.  From the spectral decomposition of $I_t^{\chi}$,
\begin{equation}
 I_t^{\chi}  = \sum_j a_j(t) \ket{\varphi_j(t)}  \bra{\varphi_j(t)}\; ,
\end{equation}
one can recognize the operators and coefficients in the pseudo-Kraus representation \eqref{pKrausChi} to be
\begin{equation}
F_i(t) = \ket{\varphi_j(t)}  \bra{\varphi_{j'}(t)} \;  , \; \; f_i(t) = a_j(t) \; ,
\end{equation}
with $i$ a double index $\{j,j'\}$. To merge the two operator-sums \eqref{CPKraus} and \eqref{pKrausChi}, one can fix a basis of operators $E_k$ for which
\begin{equation}
F_i(t) = \sum_k \gamma_{ik}(t) E_k \; , \; \; \Omega_i(t) = \sum_k \omega_{ik}(t) E_k \; ,
\end{equation}
so that the full dynamical map can be written as
\begin{equation}
\Psi^{\chi}_t[\rho_S(0)]= \sum_{k, k'} \epsilon_{kk'} (t) E_k \rho_S(0) E^{\dagger}_{k'}  \; ,
\end{equation}
with
\begin{equation}
\epsilon_{kk'} (t) = \sum_i \Big( \omega_{ik} (t) \omega^*_{ik'}(t) + f_i(t) \gamma_{ik}(t) \gamma^*_{ik'} (t) \Big) \;.
\end{equation}
If the matrix $\epsilon$ is positive semi-definite, then the full map is CP.  This suggests that the condition for complete positivity is highly case dependent in this framework, as it depends on the interplay between $\rho_E$ and $\chi$, as well as on time through the evolution $U_t$.

\subsection*{Remarks: structure of initial correlations and different dynamical maps}\label{subsec:qubits}
It is important at this point to make some remarks for the sake of clarity. First, we mention again that within our approach one cannot study initial correlations in the sense of separable, classically correlated or entangled states; these names classify the \textit{kind} of correlations between system and environment, and by fixing $\rho_E$ and $\chi$ we describe possibly very different kinds of initial correlations depending on the value of $\rho_S(0)$.  While for certain values the total system may be in a pure state, for others it can be mixed, for certain it can have zero-discord,  and so on. 
As said before, we are not here interested in studying the effects of which kind of initial correlations have on the evolution of the reduced system, but are rather concerned with the effect of \textit{some} non-zero correlations with respect to the uncorrelated evolution.

We further remark that dynamical maps found through other methods and under different conditions -- say, by assuming a specific \emph{kind} of initial correlation -- will in general, whether linear or not, be different from the one proposed here, even if both maps are suitable for describing the evolution of the very same specific initial state. The reason is that making different assumptions on the total initial state imposes a different structure for the dynamical maps: they will have a different physical domain of applicability, and describe different situations at the level of the total initial state. Still, both dynamical maps will yield the same result at all times for any state for which the associated total state enters both assumptions.

We believe it useful to clarify this concept with a very simple example.  Take two qubits, each in basis $\{\ket{0},\ket{1}\}$, coupled through a swap gate described by the Hamiltonian:
\begin{equation}
H_{\mathrm{swap}}= {1\over2}(\mathbb{I} + \sigma_x \otimes \sigma_x +  \sigma_y \otimes \sigma_y +  \sigma_z \otimes \sigma_z ) \; ,
\end{equation}
and let us consider the first qubit as the reduced system of interest, and the second as the environment. In \cite{RodriguezRosario2008}, one can find an explicit prescription for the construction of the (completely positive) dynamical map that describes the evolution of a reduced system initially in a classically correlated (zero discord) state with its environment:
\begin{equation}
\rho_{SE} (0) = \sum_i p_i \Pi_i \otimes \rho_i \; ,
\end{equation}
where $\{p_i\}$ are probabilities, $\{\Pi_i\}$ are orthonormal projections and $\{\rho_i\}$ are independent states of the environment. The dynamical map one builds from \cite{RodriguezRosario2008} depends on the choice of the sets $\{\Pi_i\}$ and $\{\rho_i\}$. For the two qubits, let us choose the projections onto the states $\ket{0}$ and $\ket{1}$:
\begin{equation} \label{zdinitialstate}
\rho_{SE} (0) = p \ket{0}\bra{0} \otimes \rho_0 + (1-p) \ket{1}\bra{1} \otimes \rho_1 \; ,
\end{equation}
and the environmental states $\rho_0 = (\mathbb{I} -\sigma_x/2)/2$ and $\rho_1 = (\mathbb{I} +\sigma_x/2)/2$. The dynamical map for the ground state population and the coherences reads:
\begin{align}\label{zdmap1}
\rho_{00} (t) =&{1\over 2} \sin^2 (t) +\cos^2 (t) \rho_{00}(0) \; ,\\ \nonumber
\rho_{01} (t) =&{1\over 4} (\sin^2 (t) -i \sin (t) \cos (t))  \\ \label{zdmap2}
 &- {1\over 2}\sin^2 (t) \rho_{00}(0)  + {\sqrt{3}\over2}\cos^2 (t) \rho_{01}(0) \; .
\end{align}
Notice the following: the map does not depend on $p$, as this is precisely the varying parameter that determines $\rho_S(0)$, such that the domain $\mathcal{P}^{\text{zd}}(\hilbert_S)$ of the above dynamical map is given by states of the form $ p \ket{0}\bra{0}+ (1-p) \ket{1}\bra{1}$ with $0\leq p\leq 1$. With this construction there is no fixed environmental state, as this depends directly on the choice of $p$. What is fixed here is the structure -- zero discord, with this specific set of projections and environmental states -- of the total initial system, namely a particular kind of initial correlation.

Let us now view this situation in our formalism.  From a specific state of the form \eqref{zdinitialstate},  i.e. with a fixed $p$, we can extract the environmental state $\rho_E$ and the correlations $\chi$, all of which depend on $p$, and regard them as the fixed initial conditions of the total system. We can this way construct the dynamical map as previously described, which reads
\begin{align}\nonumber
\rho_{00} (t) =&{1\over 2} \sin^2(t) +\cos^2 (t) \rho_{00}(0)\\ \label{qubitsmap1}
& - {1\over 2} \sin (t) \cos (t) (2 p -1) \Im\rho_{01}(0) \; , \\ \nonumber
\rho_{01} (t) =& {1\over 4} (\sin^2 (t) -i \sin (t) \cos (t))- {1\over 2}\sin^2 (t) p  \\ \nonumber
&+ {i\over 2} \sin (t) \cos (t) (2 p -1)(p-\rho_{00}(0)) \\   \label{qubitsmap2}
&+ \cos^2 (t) \rho_{01}(0) \; .
\end{align}
This dynamical map now depends on $p$ and has a different structure than \eqref{zdmap1}-\eqref{zdmap2}, but also does not act upon the same states. The domain is given by all states $\rho_S(0)$ for which $\rho_S(0)\otimes \rho_E + \chi$ is still a state; in Fig.~\ref{fig:domains_ZD} we report examples of the domain $\mathcal{P}_E^{\chi}(\hilbert_S)$ for different values of $p$. For these initial states, $\rho_{SE}(0)$ is in general not zero-discord: even though we started with considering the zero-discord state \eqref{zdinitialstate}, our real assumptions are the fixed $\rho_E$ and $\chi$; relaxing $\rho_S(0)$ gives then different kinds of initial correlations. The one state that belongs to both domains $\mathcal{P}^{\text{zd}}(\hilbert_S)$ and $\mathcal{P}_E^{\chi}(\hilbert_S)$ and is associated to the same total state
\begin{equation}
\rho_S(0)\otimes\rho_E + \chi = p \ket{0}\bra{0} \otimes \rho_0 + (1-p) \ket{1}\bra{1} \otimes \rho_1
\end{equation}
is the one given by $\rho_{00} = p$ and $\rho_{01} =0$, which is also, as expected, the only case for which \eqref{zdmap1}-\eqref{zdmap2} and \eqref{qubitsmap1}-\eqref{qubitsmap2} coincide.
\begin{figure}[tp]
\includegraphics[width=1\columnwidth]{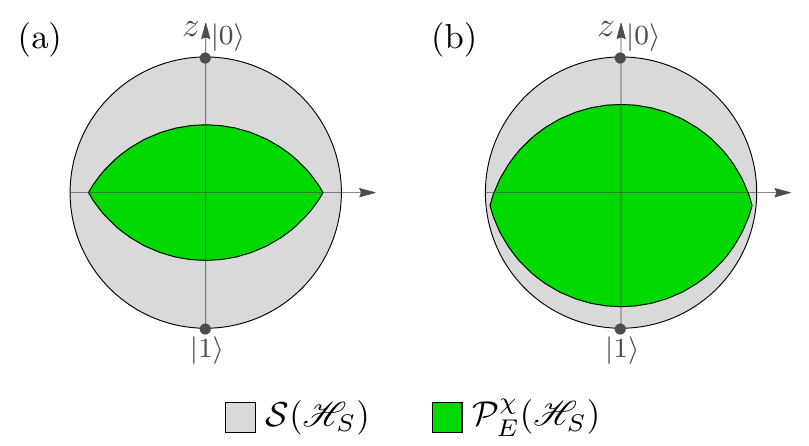}
\caption{Cross-section of the Bloch sphere through the $z$-axis showing the physical domain for the dynamical map \eqref{qubitsmap1}-\eqref{qubitsmap2}, which is determined by the values of $\rho_{00}$ and the modulus of $\rho_{01}$ only (thus axially symmetric around $z$), at values of (a) $p=1/2$ and (b) $p=7/8$, with respect to the set of states $\mathcal{S}(\hilbert_S)$.}
\label{fig:domains_ZD}
\end{figure}

As a last comment, we remark that the ``zero-discord'' dynamical map \eqref{zdmap1}-\eqref{zdmap2} does not reduce to the identity for the limit $t \rightarrow 0$. Of course it still is \emph{compatible} with the identity for the physical initial states that should be considered (in particular, it is compatible for $\rho_{01}(0)=0$), but the map itself is not the identity at $t=0$,  as opposed to our map \eqref{qubitsmap1}-\eqref{qubitsmap2}.  This can be considered a drawback for the purpose of constructing a master equation by means of a perturbative expansion. 

\section{Linear time-local master equation for initial correlations}\label{sec:me}
A main advantage of having a linear dynamical map is that the generator of the evolution can be put in generalized Lindblad form with time-dependent coefficients \cite{Breuer2012,Hall2014}. In the uncorrelated case, for example, whenever the inverse of the dynamical map exists, the generator can be written as
\begin{equation}
\mathcal{L}_t[X] := \dot{\Phi}_t\circ \Phi_t^{-1}[X] \; ,
\end{equation}
such that the uncorrelated exact master equation
\begin{equation}
\dot{\rho}_S(t) \overset{\chi=0}{=} \mathcal{L}_t[\rho_S(t)]
\end{equation}
follows. Since $\Phi_t$ is linear, then $\mathcal{L}_t$ is linear as well, and is therefore an element of the space of superoperators $\mathfrak{htp}(\hilbert_S)$ of Hermiticity and trace preserving (trace destroying, to be more precise) superoperators, i.e. which satisfy the conditions
\begin{equation}
 \Lt [X^{\dagger}] = \Lt [X]^{\dagger}, \quad \Tr \big\{\Lt [X]\big\} = 0 \quad  \forall X \in \mathcal{B}(\hilbert_S).
\end{equation}
As such, it can be written as a sum of a dissipator and a commutator with an effective Hamiltonian, i.e.
\begin{equation} \label{canonical}
 \Lt_t [X]= -i [K_S(t),X] + {\mathcal{D}}_t [X],
\end{equation}
where
\begin{equation} \label{Diss-part}
 {\mathcal{D}}_t [X] = \sum_{k}\lambda_{k}(t)\Big[L_{k}(t)
 X L_{k}^{\dag}(t) - \frac{1}{2}\big\{L_{k}^{\dag}(t)L_{k}(t),X \big\}\Big] \; ,
\end{equation}
for some rates $\lambda_k(t)\in \mathbb{R}$, operators $\{L_k(t)\}$ and Hermitian $K_S(t)$.

We can now extend this procedure for any correlated initial state using the linear dynamical map \eqref{lineardynamicalmap}. Regarding the existence of the associated generator, we see that, since the inverse dynamical map reads
\begin{equation}\label{inversemap}
{\left(\Psi^{\chi}_t\right)}^{-1}[X]= \Phi_t^{-1}[X] - \Phi_t^{-1}[I_t^{\chi}] \Tr\{X\}\; ,
\end{equation}
it follows that $\Psi^{\chi}_t$ is invertible if and only if $\Phi_t$ is invertible.  Therefore, the presence of initial correlations $\chi$ has no influence on the conditions for which the generator
\begin{equation}
\mathcal{L}^{\chi}_t[X] := \dot{\Psi}^{\chi}_t\circ {\left(\Psi^{\chi}_t\right)}^{-1}[X]
\end{equation}
exists, with respect to the uncorrelated case.
In any case, the generator explicitly reads:
\begin{equation}\label{corrgenerator}
\mathcal{L}^{\chi}_t [ X ] = \mathcal{L}_t[X] + \mathcal{J}^{\chi}_t \Tr \{ X\} \; ,
\end{equation}
with
\begin{equation}
\mathcal{J}^{\chi}_t  := \dot{I}_t^{\chi} -  \mathcal{L}_t[I_t^{\chi}]  \; .
\end{equation}
Once again for the master equation, the generalization to fixed initial correlations is given by the uncorrelated part plus an extra linear term that depends on $\chi$.  An equivalent strategy would have been, in fact, to derive an affine master equation -- which would have been identical to the one derived in \cite{Stelmachovic2001} -- from the affine dynamical map \eqref{affinedynamicalmap}, and linearize this directly through the reasoning we presented in the previous section.
We briefly remark that $\mathcal{J}^{\chi}_t$ also corresponds to the inhomogeneity one obtains for the affine TCL master equation through projection operator technique with projection $P[\rho_{SE}(t)] = \rho_S(t) \otimes \rho_E$ (see Appendix \ref{app:TCLinhomog} for a proof of this); one may therefore directly apply our proposed description to any of these results in order to obtain a linear master equation.

We shall like to see how the additional correlation term modifies equation \eqref{canonical}. We first notice that, equivalently to the possible loss of positivity of the dynamical map, the generator \eqref{corrgenerator} might no longer preserve positivity. The structure, instead, remains the same: since the superoperator $ \mathcal{J}^{\chi}_t \Tr \{ \cdot\} $ is an element of $\mathfrak{htp}(\hilbert_S)$,  it also admits a decomposition as a commutator with some Hamiltonian and a dissipator:
\begin{equation} \label{canonicalchi}
\mathcal{J}^{\chi}_t \Tr \{ X\}= -i [k^{\chi}_S(t),X] + d^{\chi}_t [X] \; . 
\end{equation}
The splitting into the two terms is made unique by choosing traceless Lindblad operators in the dissipator. An equivalent requirement is also proposed in \cite{Sorce2022}, along with a derivation of the  Hamiltonian and the dissipator starting from a pseudo-Kraus representation of the generator. For our superoperator of interest, analogously to \eqref{pKrausChi}, the pseudo-Kraus representation is given by
\begin{equation}\label{pKrausChi-me}
 \mathcal{J}^{\chi}_t \Tr \{X\} = \sum_i g_i(t) G_i (t) X G_i^{\dagger}(t) \; ,
\end{equation}
with $\sum_i g_i(t)G_i^{\dagger}(t)  G_i (t) =0$ and
\begin{equation}
G_i(t) = \ket{\eta_j(t)}  \bra{\eta_{j'}(t)} \;  , \; \; g_i(t) = b_j(t) \; ,
\end{equation}
with $i=\{j,j'\}$ a double index, and where $ \ket{\eta_j(t)}  $ are the eigenvectors of $ \mathcal{J}^{\chi}_t$ and $b_j(t)$ are its eigenvalues.
Following \cite{Sorce2022} for the expression of the Hamiltonian, we find that it happens to vanish:
\begin{eqnarray}\nonumber
k^{\chi}_S(t) &=& {1\over 2i \mathrm{d}_S} \sum_i g_i(t) \left[ \Tr\{G_i(t)\} G_i^{\dagger}(t) -  \Tr\{G_i^{\dagger}(t)\} G_i(t) \right] \\ \label{K_S^chi=0}
&=& 0\; ,
\end{eqnarray}
where $\mathrm{d}_S=\mathrm{dim}(\hilbert_S)$, since the operators $G_i(t)$ are either traceless or Hermitian. The dissipator instead reads 
\begin{equation} \label{dissipatorchi}
 d^{\chi}_t [X] = \sum_i g_i(t)\left[ J_i (t) X J_i^{\dagger}(t) - {1\over 2} \left\{ J_i^{\dagger}(t)  J_i (t), X \right\} \right] \; ,
\end{equation}
with traceless Lindblad operators
\begin{equation}
J_i (t) = G_i(t) - {  \Tr\{G_i(t)\}  \over \mathrm{d}_S} \mathbb{I} \; .
\end{equation}
The full generator is therefore given by 
\begin{equation} \label{canonicalcorr}
\mathcal{L}^{\chi}_t [ X ] = -i [K_S(t),X] + \mathcal{D}^{\chi}_t [X] \; ,
\end{equation}
with $ \mathcal{D}^{\chi}_t =  \mathcal{D}_t +  d^{\chi}_t $, and with the canonical Hamiltonian left unaltered by the presence of initial correlations.  Ultimately, the generalized master equation reads:
\begin{widetext}
\begin{equation} \label{mecorr}
\begin{aligned}
\dot{\rho}_S(t) =& -i [K_S(t),{\rho_S}(t)]  \\ 
& \quad + \sum_i\Bigg[\lambda_{i}(t)\left[L_{i}(t) {\rho_S}(t) L_{i}^{\dag}(t) - \frac{1}{2}\left\{L_{i}^{\dag}(t)L_{i}(t),{\rho_S}(t) \right\}\right] + g_i(t)\left[ J_i (t) {\rho_S}(t) J_i^{\dagger}(t) - {1\over 2} \left\{ J_i^{\dagger}(t)  J_i (t), {\rho_S}(t) \right\} \right] \Bigg]\; .
\end{aligned}
\end{equation}
\end{widetext}
Depending on the case, it might be easy or straightforward to merge the two dissipators appearing above into a unique expression. If the set of Lindblad operators $\{J_i\}$ is a subset of the uncorrelated set $\{L_i\}$, the contribution of initial correlations appears solely as a renormalization of the rates. This is the case of our example in Sec.~\ref{sec:example}.  

With the above we have shown that the presence of initial correlations does not change the structure of the time-local master equation. Since this procedure can be carried out for any initial correlation $\chi$,  this shows that the formal treatment of the evolution of the reduced system in terms of a linear and time local master equation as it is known for uncorrelated initial states can be extended to describe any initially correlated state of the total system, given that the initial correlation operator is known.

Let us finally remark that a Lindblad-type master equation including contributions from evolved system-environment correlations was derived in \cite{Alipour2020}; this equation is non-linear in the reduced density matrix, and thus cannot be put in the simple form we employ here -- at least, as it looks, not without compromising the aim of it to keep track of correlations as they evolve in time. We instead care for total-system quantities only at initial times: in the subsequent evolution, we are only concerned with system degrees of freedom, as in the original open quantum system approach. 

\section{Example}\label{sec:example}

In the following section the system of interest is given by a two-level system with Hamiltonian
\begin{align}
    H_S = \omega_0 \sigma_+\sigma_-\;,
\end{align}
where $\omega_0$ is the transition frequency and $\sigma_\pm$ the raising and lowering operators between the ground state $\ket{g}$ and the excited state $\ket{e}$.
This qubit is coupled to a monochromatic radiation field with Hamiltonian
\begin{align}
    H_E = \omega b^\dagger b\;,
\end{align}
frequency $\omega$, creation and annihilation operators $b^\dagger$ and $b$, via the interaction Hamiltonian
\begin{align}
    H_I =  g \left( \sigma_+\otimes b + \sigma_-\otimes b^\dagger \right)
\end{align}
with real coupling strength $g$.
The total Hamiltonian $H_S+H_E+H_I$ is known as the Jaynes-Cummings model and is exactly solvable for all initial states $\rho_{SE}(0)$ including all possible correlations~\cite{Jaynes1963, grynberg2010}.
For clarity reasons, however, we will not consider the most general case, but take one specific initial state of the total system as a reference,
\begin{align}\label{example2:modelstate}
    \rho_{SE} = p_0 \rho_0 \otimes \ketbra{0}{0} + (1-p_0)\rho_1 \otimes \ketbra{1}{1}\;,
\end{align}
with $0<p_0<1$. Here the two-level system states are taken as $\rho_0 = (\mathbb{I} + a \sigma_z)/2$ and $\rho_1 = (\mathbb{I} - a \sigma_z)/2$, mixed system states corresponding to opposite points on the $z$-axis of the Bloch sphere, with $0 \leq a \leq 1$. This is also a zero-discord state, as the environment states are taken to be the projectors onto the ground and the first excited state.
The corresponding reduced states of system and environment read
\begin{align}\label{rhos_example}
    \rho_S &= \frac{1}{2}\left(\mathbb{I} + a(2p_0-1)\sigma_z \right) \\
    \rho_E &= p_0 \ketbra{0}{0} + (1-p_0) \ketbra{1}{1}
\end{align}
and give rise to the correlation operator
\begin{align}\label{chi_JC}
    \chi &= p_0(1-p_0)a \sigma_z \otimes \left( \ketbra{0}{0} - \ketbra{1}{1} \right)\;.
\end{align}

As described in Sec.~\ref{sec:affine} we now consider total initial states of form $\rho_{SE}(0) = \rho_S(0)\otimes\rho_E+\chi$ with fixed $\rho_E$ and $\chi$, and construct the physical domain $\mathcal{P}_E^\chi(\hilbert_S)$ of $\rho_S(0)$ for which $\rho_{SE}(0)$ is a proper state of the total system.
Interestingly, $\rho_{SE}(0)$ can be easily recast in the form
\begin{align}
    \rho_{SE}(0) = p_0 \tau_0 \otimes \ketbra{0}{0} + (1-p_0) \tau_1 \otimes \ketbra{1}{1}\;,
\end{align}
with system operators $\tau_i = (\mathbb{I}+[\vec{v}-\vec{c}_i]\vec{\sigma})/2$, Bloch vector $\vec{v}$ of $\rho_S(0)$, $\vec{c}_0 = -2a(1-p_0) \vec{e}_z$ and $\vec{c}_1 = 2 a p_0 \vec{e}_z$.
It can be shown that $\rho_{SE}(0)$ is a state of the total system if and only if both $\tau_i$ are states of the reduced system.
This implies that $\vec{v}$ must satisfy the equations
\begin{align}
&|| \vec{v}-\vec{c}_0||^2 \leq 1\\ 
&|| \vec{v}-\vec{c}_1||^2 \leq 1\; ,
\end{align}
such that the physical domain can be interpreted in the Bloch sphere as the intersection of the unit spheres around $\vec{c}_0$ and $\vec{c}_1$, as shown in Fig.~\ref{fig:JC_domains}.
The center of the physical domain $\vec{r}=a(2p_0-1) \vec{e}_z$ is exactly the Bloch vector of the reference state $\rho_S$ and the volume depends on the parameter $a$. For the limiting case $a=1$ the physical domain reduces to the single point $\rho_S$, while for $a=0$ the reference state $\rho_{SE}$ becomes a product state, for which the physical domain is the entire Bloch sphere.

\begin{figure}
    \centering
    \includegraphics[width=\columnwidth]{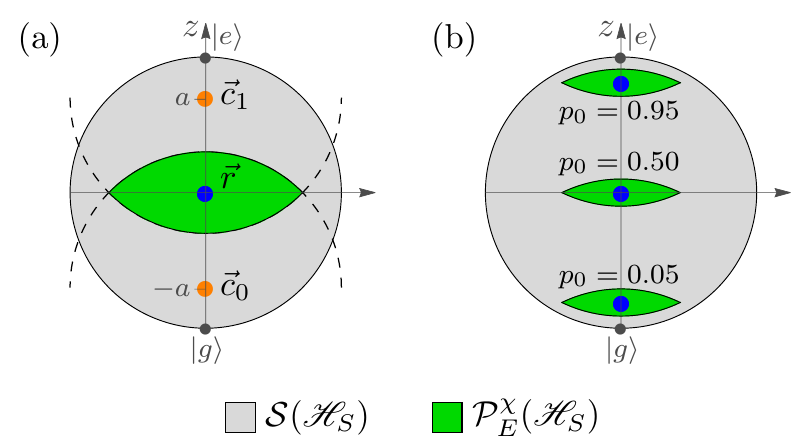}
    \caption{
    Physical domain of the dynamical map in a 2-dimensional representation of the Bloch sphere.
    (a) The physical domain $\mathcal{P}_E^\chi(\hilbert_S)$ is the intersection of two unit spheres (dashed) with centers $\vec{c}_0$ and $\vec{c}_1$. The Bloch vector $\vec{r}$ of the reduced reference state $\rho_{S}$ is the center of the physical domain. Here we choose parameters $p_0=0.5$ and $a=0.7$.
    (b) For fixed $a=0.9$ we show how the choice of $p_0$ determines the position of $\vec{r}$ and hence $\mathcal{P}_E^\chi(\hilbert_S)$.
    }
    \label{fig:JC_domains}
\end{figure}

To find the affine dynamical map we make use of the expression for the uncorrelated dynamical map found in Ref.~\cite{Smirne2010b} and calculate the inhomogeneity exploiting the expression for the time evolution operator $U(t)$ of the total system.
Since these expressions are given in the interaction picture with respect to $H_S+H_E$, we transform the results back to the Schrödinger picture and receive
\begin{align}
\label{example2:affinedynamicalmapgg}
    \rho_{gg} (t) &= \rho_{gg}(0)[\alpha(t)+\beta(t)-1] + 1 - \beta(t) - f(t) \\ \label{example2:affinedynamicalmapeg}
    \rho_{eg}(t) &= \rho_{eg}(0)e^{-\im\omega_0 t}\gamma(t)
\end{align}
with time-dependent coefficients
\begin{align}
    \alpha(t) &= 1 - (1-p_0) \Abs{d_1(t)}^2 \\
    \beta(t) &= 1 - p_0 \Abs{d_1(t)}^2 - (1-p_0) \Abs{d_2(t)}^2 \\
    \gamma(t) &= c_1(t) \left[ p_0 + (1-p_0)c_2(t) \right] \\
    f(t) &= ap_0(1-p_0) \Abs{d_2(t)}^2\;,
\end{align}
where the functions
\begin{align}
    c_n(t) &= e^{\im \Delta t / 2} \left[\cos\left(\frac{\Omega_n}{2}t\right) - \im \frac{\Delta}{\Omega_n} \sin\left(\frac{\Omega_n}{2}t\right) \right] \\
    |d_n(t)|^2 &= n\left(\frac{2g}{\Omega_n}\right)^2 \sin^2\left(\frac{\Omega_n}{2}t\right)\;.
\end{align}
oscillate with detuning frequency $\Delta=\omega_0-\omega$ and excitation-dependent Rabi frequencies $\Omega_n = \sqrt{\Delta^2+4g^2 n}$.
While the coefficients $\alpha(t)$, $\beta(t)$ and $\gamma(t)$ describe the uncorrelated part of the evolution, the effect of the initial correlations is encoded in the coefficient $f(t)$ coming from the inhomogeneity $I_t^\chi = f(t)\sigma_z$,  and leads to periodic additional excitations of the system. This is shown in Fig.~\ref{fig:dynamics_comparison} for a specific choice of model parameters. For the specific form of correlations \eqref{chi_JC} we have taken, the coherences are not influenced by the presence of the initial correlation. It is important to note, however, that a different choice of $\chi$ may in general lead to changes also in the off-diagonal elements.

\begin{figure}
    \centering
    \includegraphics[height=5cm]{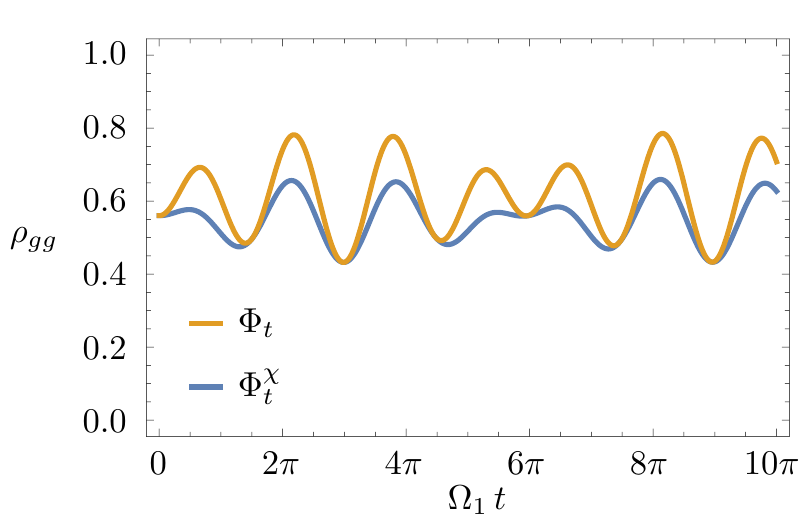}
    \caption{Comparison of the ground state probability $\rho_{gg}$ at time $t$ after evolution with dynamical maps $\Phi_t^\chi$ and $\Phi_t$ with and without the initial correlations $\chi$, respectively.
    For the correlated evolution $\rho_{gg}$ is reduced by the function $f(t)\geq0$ with respect to the uncorrelated evolution.
    We take $\rho_S(0)=\rho_S$, with $\rho_S$ as in Eq.~\eqref{rhos_example}, and parameters $a=0.6$, $p_0=0.4$, $\Delta=0.1\, \omega_0$ and $g=0.1\, \omega_0$.
    }
    \label{fig:dynamics_comparison}
\end{figure}

From the affine dynamical map we can also calculate the linearized dynamical map and finally the time-local master equation generator \eqref{canonicalcorr}
as described in Secs.~\ref{sec:linear} and~\ref{sec:me}.
Then the effective system Hamiltonian is given by the original Hamiltonian $H_S$ plus a time-dependent Lamb shift,
\begin{align}
    K_S(t) = \left[\omega_0 - \Im\left(\frac{\dot{\gamma}(t)}{\gamma(t)}\right) \right] \sigma_+\sigma_-\,,
\end{align}
and appears independent of the initial correlations as required by Eq.~\eqref{K_S^chi=0}.
The correlated dissipator $\mathcal{D}_t^\chi$ is the sum of the uncorrelated dissipator $\mathcal{D}_t$ from Ref.~\cite{Smirne2010b} and the dissipator $d_t^\chi$ induced by the correlations.
Interestingly,  the latter can be written using $\sigma_\pm$ as Lindblad operators,  which also appear in $\mathcal{D}_t$, such that we directly get a unified representation
\begin{align}
    \mathcal{D}_t^\chi[X] = \sum_{k = +,-,z} \lambda_k(t) \left[\sigma_k X \sigma_k^\dagger - \frac{1}{2} \{\sigma_k^\dagger\sigma_k, X\} \right]
\end{align}
with Lindblad rates
\begin{align}
    \lambda_+ &= \frac{(\alpha-f-1)\dot{\beta} - (\beta+f)\dot{\alpha} + (\alpha+\beta-1)\dot{f}}{\alpha+\beta-1} \\
    \lambda_- &= \frac{(\beta+f-1)\dot{\alpha} - (\alpha-f)\dot{\beta} - (\alpha+\beta-1)\dot{f}}{\alpha+\beta-1} \\
    \lambda_z &= \frac{1}{4} \left[ \frac{\dot{\alpha}+\dot{\beta}}{\alpha+\beta-1} - 2 \Re\left(\frac{\dot{\gamma}}{\gamma}\right) \right]\;,
\end{align}
where the time arguments have been omitted to increase readability.  Only $\lambda_+$ and $\lambda_-$ are modified by the initial correlations, and for the case $f(t)=0$ one recovers the master equation for the system under the assumption of factorizing initial conditions.

\section{Conclusions and applications}\label{sec:concl}
In this work we have explicitly shown how to uniquely linearize the affine dynamical map and time-local master equation that arise for a reduced system showing some fixed correlations with its environment at initial times. It turns out that the simple ``fix'' of multiplying the inhomogeneity by $\Tr\{\rho_S\}$ constitutes the unique linear extension for both the dynamical map and the master equation. This allowed us to generalize important tools for the case of initial correlations, while keeping track of the separate terms pertaining to the associated uncorrelated initial state and to the initial correlation operator.  In particular, we see that the correlation contribution in the master equation modifies only the dissipator and leaves invariant the Hamiltonian obtained in the uncorrelated case. The main drawback of this method, where the initial correlation operator must be known, is that the resulting dynamical map and master equation are only valid for a restricted set of initial reduced state $\mathcal{P}^{\chi}_E$ which defines the ``physical domain'' of these maps. The dynamical map is, in addition, in general not completely positive, although this does not affect the results on the dynamics and the master equation. The main advantage, however, is the gain of a linear, modified time-local master equation that contains initial correlations, and that exists any time the uncorrelated master equation does. With the help of such a master equation, popular techniques and applications that were previously only applicable under the assumption of uncorrelated initial states, such as stochastic unraveling and quantum thermodynamics, can be straightforwardly extended to account for initial correlations. 

Once one adopts the point of view described in this paper, it becomes evident that initial correlations really have a structural impact on the dynamics only when they are taken into account as an entity whose details and characterization are regarded as an important feature on its own. The mere fact that the system and environment are correlated, regardless of \emph{how} they are correlated, does not imply a formal change to the structure of the evolution of the reduced system with respect to the uncorrelated case, and most importantly does not challenge the applicability of techniques and theories which are built upon the idea of a linear evolution. 

\begin{acknowledgments}
We would like to thank Matteo Gallone and Bassano Vacchini for useful comments and discussion. This project has received funding from the European Union’s Framework Programme for Research and Innovation Horizon 2020 (2014-2020) under the Marie Skłodowska-Curie Grant Agreement No. 847471. We acknowledge support by the Open Access Publication Fund of the University of Freiburg.
\end{acknowledgments}

\begin{appendix}

\section{Proof of a unique linear extension}\label{app:proofs}
The statement of an affine map admitting a unique linear extension on the span of the affine subspace on which it acts might seem an obvious mathematical fact. We believe that it is nonetheless useful to see an explicit proof, both for the particular case at hand and in more abstract terms.
\subsection{Linear extension from trace one operators}
We start by analyzing the precise case studied in Sec.~\ref{sec:linear}, namely a linear extension from the set of trace one operators to the space of bounded operators. Let $\Phi : \mathcal{B}(\hilbert_S) \longrightarrow \mathcal{B}(\hilbert_S)$ be an affine map. It can therefore be decomposed as 
\begin{equation}
\Phi [X] = \Lt[X] + B \; ,
\end{equation}
where $\Lt$ is a linear map and $B \in \mathcal{B}(\hilbert_S)$ is a constant shift. The map 
\begin{equation}
\Psi [X] = \Lt[X] + B\Tr\{X\} 
\end{equation}
is a linear extension of $\Phi$ from the set of trace one operators $\mathcal{A}^1(\hilbert_S)$ to the full $\mathcal{B}(\hilbert_S)$. 
\begin{mythe}\label{theo:affine1}
Let $\Lambda : \mathcal{B}(\hilbert_S) \longrightarrow \mathcal{B}(\hilbert_S)$ be a linear map satisfying 
\begin{equation}
\Lambda[X]= \Phi[X] \; \; \forall X \in \mathcal{A}^1(\hilbert_S) \; .
\end{equation}
Then 
\begin{equation}\label{uniqueext}
\Lambda[X] = \Psi [X] \;\;\; \forall X \in \mathcal{B}(\hilbert_S)\; ,
\end{equation} 
i.e. $\Psi$ is the unique linear extension of $\Phi$.
\end{mythe}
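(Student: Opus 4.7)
The plan is to prove uniqueness by exploiting linearity together with the observation that the affine subspace $\mathcal{A}^1(\hilbert_S)$ linearly spans $\mathcal{B}(\hilbert_S)$. First I would verify that the candidate extension $\Psi$ defined in the statement indeed agrees with $\Phi$ on $\mathcal{A}^1(\hilbert_S)$: for any $X$ with $\Tr\{X\}=1$ one has $\Psi[X]=\Lt[X]+B\cdot 1=\Lt[X]+B=\Phi[X]$, so $\Psi$ is a linear extension of $\Phi\vert_{\mathcal{A}^1(\hilbert_S)}$. The existence half of the claim is thus immediate from the explicit definition.

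For uniqueness, I would form the difference $\Delta:=\Lambda-\Psi$, which is linear (as a difference of linear maps) and, by the hypothesis on $\Lambda$ together with the preceding paragraph, vanishes on $\mathcal{A}^1(\hilbert_S)$. Establishing \eqref{uniqueext} then reduces to showing $\Delta\equiv 0$ on $\mathcal{B}(\hilbert_S)$, i.e. to showing that every bounded operator decomposes as a (complex) linear combination of trace-one operators.

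This decomposition I would handle in two cases. If $X\in\mathcal{B}(\hilbert_S)$ has $\Tr\{X\}=t\neq 0$, then $X/t\in\mathcal{A}^1(\hilbert_S)$, so $X=t\,(X/t)$ and linearity of $\Delta$ gives $\Delta[X]=t\,\Delta[X/t]=0$. If instead $\Tr\{X\}=0$, I would pick any reference trace-one operator $\rho\in\mathcal{A}^1(\hilbert_S)$ (for instance $\rho=\mathbb{I}/\mathrm{d}_S$); then $X+\rho$ is trace-one, and the identity $X=(X+\rho)-\rho$ expresses $X$ as a linear combination of two elements of $\mathcal{A}^1(\hilbert_S)$, so that $\Delta[X]=\Delta[X+\rho]-\Delta[\rho]=0$. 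Combining the two cases yields $\Delta\equiv 0$, i.e. $\Lambda=\Psi$ on all of $\mathcal{B}(\hilbert_S)$.

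The main obstacle is conceptual rather than computational: one needs the observation that although $\mathcal{A}^1(\hilbert_S)$ is only an affine (not linear) subspace, it nevertheless linearly generates all of $\mathcal{B}(\hilbert_S)$, since traceless operators can always be shifted into the trace-one slice and back. Once this is in place, the rest is a direct consequence of the linearity of $\Lambda$ and $\Psi$. A clean way to present the argument would be first to state and prove the span identity $\mathrm{Span}(\mathcal{A}^1(\hilbert_S))=\mathcal{B}(\hilbert_S)$ as a short lemma, and then deduce the theorem as a one-line application, paralleling the abstract formulation that the authors anticipate in the following subsection.
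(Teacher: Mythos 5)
Your proof is correct and follows essentially the same route as the paper's: rescale operators with nonzero trace into $\mathcal{A}^1(\hilbert_S)$, and write traceless operators as a combination of trace-one (or nonzero-trace) operators, then invoke linearity. Your packaging via the difference map $\Delta=\Lambda-\Psi$ and the explicit split $X=(X+\rho)-\rho$ is only a cosmetic variation of the paper's decomposition $X=X_1+X_2$ with $\Tr\{X_{1,2}\}\neq 0$.
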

\begin{proof}
Assume first that $\Tr \{ X\} \neq 0$. Then the operator $X' = X/\lambda$, with $\lambda = \Tr \{ X\} \neq 0$, has trace one. Thus 
\begin{equation}
\Lambda[\lambda X'] = {\lambda} \Lambda[X'] = \lambda \Phi[X'] = \lambda \Psi[X'] = \Psi[X] \; ,
\end{equation} 
implying \eqref{uniqueext} for all $X$ with $\Tr\{X\}\neq 0$.

Let now $\Tr\{X\} = 0$. There exist $X_1, \; X_2 \in \mathcal{B}(\hilbert_S)$ with $\Tr \{X_{1,2}\} \neq 0$ such that $X = X_1 + X_2$. Thus
\begin{equation}
\Lambda[X] = \Lambda[X_1] + \Lambda[X_2] = \Psi[X_1] + \Psi[X_2] = \Psi[X] \; ,
\end{equation} 
which concludes the proof. 
\end{proof}

\subsection{Generic affine subspaces}
We now turn to the more general case of an affine map acting on affine subspaces and study its possible linear extension. We start with a technical Lemma of intuitive content and continue with the main statement.
\begin{mylem}\label{lemma:affelements}
Let $\mathcal{A}$ be an affine subspace of a vector space $V$ (on a field $K$), and let $d$ be the dimension of $\mathcal{A}$. Then there exists a set $\{ \bar{a}_i\} $ of $d+1$ elements of $\mathcal{A}$ such that any element of $\mathcal{A}$ can be written as an affine combination of them with unique coefficients.  
\end{mylem}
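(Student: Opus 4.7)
The plan is to reduce the affine statement to a standard fact about bases of a linear space by translating $\mathcal{A}$ to the origin. Concretely, I would pick any fixed element $\bar{a}_0 \in \mathcal{A}$, set $W := \{ a - \bar{a}_0 \mid a \in \mathcal{A}\}$, and use the fact that by definition of an affine subspace $W$ is a $d$-dimensional linear subspace of $V$ with $\mathcal{A} = \bar{a}_0 + W$. Let $\{w_1, \ldots, w_d\}$ be a basis of $W$, and define $\bar{a}_i := \bar{a}_0 + w_i$ for $i = 1, \ldots, d$. These are $d+1$ elements of $\mathcal{A}$ by construction.

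Next I would establish existence of the decomposition. For an arbitrary $a \in \mathcal{A}$, write $a = \bar{a}_0 + w$ with $w \in W$, and expand $w = \sum_{i=1}^d \lambda_i w_i$ in the chosen basis. Substituting $w_i = \bar{a}_i - \bar{a}_0$ yields
\begin{equation}
a = \Bigl(1 - \sum_{i=1}^d \lambda_i\Bigr) \bar{a}_0 + \sum_{i=1}^d \lambda_i \bar{a}_i,
\end{equation}
which is an affine combination of $\{\bar{a}_0, \bar{a}_1, \ldots, \bar{a}_d\}$ since the coefficients sum to one.

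For uniqueness, suppose $a = \sum_{i=0}^d \mu_i \bar{a}_i$ with $\sum_{i=0}^d \mu_i = 1$. Substituting $\bar{a}_i = \bar{a}_0 + w_i$ (with $w_0 := 0$) gives $a - \bar{a}_0 = \sum_{i=1}^d \mu_i w_i$. By linear independence of $\{w_1, \ldots, w_d\}$ the $\mu_i$ for $i \geq 1$ are uniquely determined, and the affine constraint $\sum_i \mu_i = 1$ then fixes $\mu_0$ uniquely as well.

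The only subtle point, which I would flag carefully, is the role of the affine constraint in pinning down $\mu_0$: without it the decomposition along $\bar{a}_0, \ldots, \bar{a}_d$ would be underdetermined because $\bar{a}_0$ is linearly dependent on the other $\bar{a}_i$ in general. Enforcing $\sum_i \mu_i = 1$ is exactly what restores uniqueness, and this is the place where the distinction between affine and linear bases really enters. Aside from this, the proof is a direct translation between the affine structure of $\mathcal{A}$ and the linear structure of its direction space $W$.
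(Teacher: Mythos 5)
Your proof is correct and takes essentially the same route as the paper's: fix an anchor point of $\mathcal{A}$, take a basis of its direction space, and use the $d+1$ points consisting of the anchor and its translates along the basis vectors, with uniqueness following from linear independence plus the affine constraint. The differences are cosmetic only --- you use a general basis where the paper speaks of ``orthogonal'' vectors, and you spell out the uniqueness step a bit more explicitly than the paper does.
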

\begin{proof}
Let $\alpha\in \mathcal{A}$. Any element $a \in \mathcal{A}$ can be written as $a =  \sum_{i=1}^{d} c_i v_i + \alpha $, with $v_i \in V$ orthogonal and unique coefficients $c_i \in K$. Let now $\bar{a}_i \equiv \alpha + v_i$ for $i=1...d$, and $\bar{a}_{d+1} \equiv \alpha$. Then, $\forall \;a \in \mathcal{A}$
\begin{eqnarray}
a = \sum_{i=1}^{d} c_i \bar{a}_i + \underbrace{(1 - \sum_{i=1}^{d} c_i )}_{=: \; c_{d+1}} \alpha = \sum_{i=1}^{d+1} c_i \bar{a}_i \; ,
\end{eqnarray}
with $\sum_{i=1}^{d+1} c_i =1$. From the uniqueness of the coefficients $\{c_i\}_{i=1}^d$ it follows that all $d+1$ coefficients for each element are uniquely determined by the set $\{\bar{a}_i\}$.
\end{proof}

\begin{mythe}\label{theo:affine2}
An affine map $f: \mathcal{A} \longrightarrow \mathcal{A}'$ between two affine subspaces $\mathcal{A}$ and $\mathcal{A}'$ of vector spaces $V$ and $V'$ can be uniquely extended to a linear map
\begin{equation}
F : \mathrm{Span}(\mathcal{A}) \longrightarrow \mathrm{Span}(\mathcal{A}') \; ,
\end{equation}
where $\mathrm{Span}(\mathcal{A})$ denotes the set of all linear combinations of elements of $\mathcal{A}$.
\end{mythe}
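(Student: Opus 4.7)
The plan is to bootstrap from Lemma~\ref{lemma:affelements}: it supplies $d+1$ reference elements $\{\bar a_i\}_{i=1}^{d+1}$ of $\mathcal{A}$ through which every $a\in\mathcal{A}$ has a unique affine-combination representation. I would define the candidate extension on these generators by $F(\bar a_i):=f(\bar a_i)$, extend by linearity to all of $\mathrm{Span}(\mathcal{A})$, and then verify both that $F|_{\mathcal{A}}=f$ and that such $F$ is unique.

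The first substantive step is to check that, in the generic case $0\notin\mathcal{A}$, the $\{\bar a_i\}$ are actually \emph{linearly} independent in $V$ (not merely affinely independent inside $\mathcal{A}$). Using the notation in the proof of Lemma~\ref{lemma:affelements}, $\bar a_{d+1}=\alpha$ and $\bar a_i=\alpha+v_i$ for $i\le d$ with $\{v_i\}$ linearly independent; since $\alpha\notin W:=\mathrm{span}\{v_i\}$, any nontrivial linear relation $\sum_i c_i\bar a_i=0$ forces $(\sum_i c_i)\alpha\in W$, hence $\sum_i c_i=0$, and then $\sum_{i\le d}c_i v_i=0$ gives $c_i=0$ for all $i$. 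So the $\{\bar a_i\}$ form a basis of $\mathrm{Span}(\mathcal{A})$ and $F$ extends unambiguously to a linear map on $\mathrm{Span}(\mathcal{A})$. To see $F|_{\mathcal{A}}=f$, I would write any $a\in\mathcal{A}$ as $a=\sum_i c_i\bar a_i$ with $\sum_i c_i=1$ (the unique affine representation from Lemma~\ref{lemma:affelements}) and chain
\[
F(a)=\sum_i c_i F(\bar a_i)=\sum_i c_i f(\bar a_i)=f\!\Big(\sum_i c_i\bar a_i\Big)=f(a),
\]
where the third equality is exactly the affinity of $f$ combined with $\sum_i c_i=1$. That $F$ takes values in $\mathrm{Span}(\mathcal{A}')$ is automatic: $F(\bar a_i)=f(\bar a_i)\in\mathcal{A}'\subset\mathrm{Span}(\mathcal{A}')$, and linearity preserves this.

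Uniqueness then reduces to a one-line argument: any two linear extensions differ by a linear map on $\mathrm{Span}(\mathcal{A})$ that vanishes on $\mathcal{A}$, hence on every linear combination of elements of $\mathcal{A}$. The only subtlety I anticipate is the edge case $0\in\mathcal{A}$, in which the $\{\bar a_i\}$ acquire a linear dependence and $\mathrm{Span}(\mathcal{A})=\mathcal{A}$ is already a linear subspace; here the statement collapses to the remark that $f$ is its own linear extension, which is consistent only when $f$ is itself linear (i.e. $f(0)=0$). This edge case does not affect the paper's application, where $\mathcal{A}=\mathcal{A}^1(\hilbert_S)$ is a coset of the traceless operators and avoids the origin.
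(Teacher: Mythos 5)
Your proof is correct and rests on the same key ingredient as the paper's, Lemma~\ref{lemma:affelements}, but it is organized differently. The paper defines $F(v):=\sum_i\beta_i f(a_i)$ from an \emph{arbitrary} linear representation $v=\sum_i\beta_i a_i$ and then shows representation independence by re-expanding the $a_i$ over the reference elements $\bar a_j$ and invoking the affinity of $f$; you instead show that (when $0\notin\mathcal{A}$) the $\bar a_j$ are linearly independent, hence a basis of $\mathrm{Span}(\mathcal{A})$, define $F$ on that basis, verify $F|_{\mathcal{A}}=f$ via affinity, and obtain uniqueness from the one-line remark that a linear map is determined by its values on a spanning set. Your route buys two things. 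First, it makes explicit a fact the paper's argument uses tacitly: the step asserting $\sum_i\beta_i\lambda_{ij}=\sum_i\beta'_i\lambda'_{ij}$ for all $j$ is precisely linear independence of the $\bar a_j$ in $V$, which holds exactly when $0\notin\mathcal{A}$. Second, you correctly flag the degenerate case $0\in\mathcal{A}$, where $\mathrm{Span}(\mathcal{A})=\mathcal{A}$ and a linear extension exists only if $f$ is itself linear; as you note, this is harmless for the application, since $\mathcal{A}^1(\hilbert_S)$ consists of trace-one operators and so avoids the origin. The paper's representation-independence argument, by contrast, avoids singling out a basis and treats the two decompositions symmetrically, but it is subject to the same implicit hypothesis $0\notin\mathcal{A}$ that you state openly.
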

\begin{proof}
Let $v\in \mathrm{Span}(\mathcal{A})$. It can therefore be written as a linear combination of elements in $\mathcal{A}$, $v = \sum_i \beta_i a_i$, with $\beta_i \in K$ and $a_i \in \mathcal{A}$. The map $F(v) := \sum_i \beta_i f(a_i)$ is linear and extends $f$. Suppose there is another set $\{\beta'_i\}$ and $\{ a'_i\}$ such that $v = \sum_i \beta'_i a'_i$. The map $F'(v) := \sum_i \beta'_i f(a'_i)$ is also linear and an extension of $f$. We show that the two maps $F$ and $F'$ always coincide, therefore the linear extension of $f$ is unique.

Per Lemma \ref{lemma:affelements}, 
\begin{eqnarray}
a_i &=&  \sum_{j=1}^{d+1} \lambda_{ij} \bar{a}_j \; ,  \;\;  \sum_{j=1}^{d+1} \lambda_{ij} =1 \;  \forall \; i \\
a'_i &=&  \sum_{j=1}^{d+1} \lambda'_{ij} \bar{a}_j \; ,  \;\;  \sum_{j=1}^{d+1} \lambda'_{ij} =1  \; \forall \; i  \; .
\end{eqnarray}
From the definition of $v$ it follows that $ \sum_{i} \beta_i\lambda_{ij} = \sum_{i} \beta'_i\lambda'_{ij} \; \forall j$.  Now, since $f$ is affine and thus preserves affine combinations, we have
\begin{eqnarray}
F(v)&=&  \sum_i \beta_i f\left(\sum_{j=1}^{d+1} \lambda_{ij} \bar{a}_j \right) = \sum_{ij} \beta_i \lambda_{ij} f(\bar{a}_j) \\
F'(v)&=&   \sum_{ij} \beta'_i \lambda'_{ij} f(\bar{a}_j) \equiv F(v)  \; .
\end{eqnarray}
\end{proof}

To conclude, let us see a case for which the Span of the affine subspace considered is guaranteed to be the complete vector space, as in our case of interest.
\begin{mylem}\label{lemma:span}
Let $\mathcal{A}$ be the affine subspace of vector space $V$ which is generated by the requirement of a fixed (nonzero) outcome to a linear functional $g : V \longrightarrow K$, i.e.
\begin{equation}
\mathcal{A} = \{ v \in V \;| \;g(v) = c\} \; , \; c\neq 0 \;.
\end{equation}
Then, $\mathrm{Span}(\mathcal{A}) = V$. 
\end{mylem}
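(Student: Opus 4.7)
The plan is to prove the two inclusions $\mathrm{Span}(\mathcal{A}) \subseteq V$ and $V \subseteq \mathrm{Span}(\mathcal{A})$. The first is immediate from the definition, since every element of $\mathcal{A}$ lies in $V$ and $V$ is closed under linear combinations. The content of the statement is therefore the reverse inclusion: every $v \in V$ can be written as a linear combination of elements of $\mathcal{A}$. The key structural observation that makes this work is that $c \neq 0$ forces $g$ to be nontrivial, so $\mathcal{A}$ is a nonempty affine hyperplane that does not pass through the origin, and hence is ``generic'' enough to have full span.

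To show $V \subseteq \mathrm{Span}(\mathcal{A})$, I would take an arbitrary $v \in V$ and split into two cases according to the value of $g(v)$. In the nontrivial case $g(v) \neq 0$, the element $v' := (c/g(v))\, v$ satisfies $g(v') = c$ and therefore lies in $\mathcal{A}$, so $v = (g(v)/c)\, v'$ is a scalar multiple of an element of $\mathcal{A}$, hence in $\mathrm{Span}(\mathcal{A})$. In the remaining case $g(v) = 0$, I would pick any fixed $a \in \mathcal{A}$ (which exists because $c \neq 0$ rules out $g$ being identically zero, so the preimage $g^{-1}(c)$ is nonempty). Then $g(v+a) = g(v) + g(a) = 0 + c = c$, so $v + a \in \mathcal{A}$, and the identity $v = (v+a) - a$ exhibits $v$ as a linear combination of two elements of $\mathcal{A}$.

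Putting the two cases together gives $V \subseteq \mathrm{Span}(\mathcal{A})$, which combined with the trivial inclusion yields the claim. There is no real obstacle here; the only subtlety worth flagging is the handling of vectors in the kernel of $g$, which cannot be obtained by scaling a single element of $\mathcal{A}$ and require the additive trick $v = (v+a) - a$. The hypothesis $c \neq 0$ is essential and used in two places: it ensures $\mathcal{A}$ is nonempty (so that a reference element $a$ exists), and it makes the rescaling $(c/g(v))\, v$ well-defined. For $c = 0$ the statement fails, since $\mathcal{A} = \ker g$ is itself a proper subspace of $V$ whenever $g \not\equiv 0$, and the argument correctly breaks down at the case $g(v) \neq 0$.
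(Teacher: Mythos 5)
Your proof is correct, and it takes a somewhat different route from the paper's. The paper fixes an arbitrary basis $\{v_i\}$ of $V$, rescales each basis vector into the hyperplane via $v_i' = c\,v_i/c_i$ with $c_i = g(v_i)$, and re-expresses the coefficients accordingly; you instead avoid any choice of basis and argue vector by vector, rescaling $v$ directly when $g(v)\neq 0$ and using the translation trick $v=(v+a)-a$ with $a\in\mathcal{A}$ when $v\in\ker g$. Your case split is actually the more careful of the two: the paper's formula divides by $c_i$ and therefore tacitly requires a basis on which $g$ never vanishes (for $g=\Tr$ the standard matrix-unit basis already violates this), whereas your additive trick handles kernel vectors explicitly, so nothing needs to be assumed about the basis at all. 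What the paper's version buys is an explicit global formula $v=\sum_i \beta_i' v_i'$ exhibiting the linear combination in one stroke, which mirrors how the extension is used in the main text. One small point of logic in your write-up: $c\neq 0$ by itself does not rule out $g\equiv 0$ (in that degenerate case $\mathcal{A}$ would be empty and the claim false for $V\neq\{0\}$); what you really need, and what both you and the paper implicitly assume, is that $\mathcal{A}$ is nonempty, equivalently that $g$ is a nonzero functional --- which of course holds in the intended application $g=\Tr\{\cdot\}$, $c=1$.
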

\begin{proof}
The fact that $\mathrm{Span}(\mathcal{A}) \subset V$ is trivial since $\mathcal{A}\subset V$. For the inverse, let $\{v_i\}$ be any basis of $V$, such that any element can be written as $v= \sum_i \beta_i v_i$.  In general, $g(v_i)= c_i$. But defining $v'_i = c v_i/c_i \in \mathcal{A}$ and $\beta'_i = c_i \beta_i /c$ gives any $v$ as a linear combination of elements in the affine space, therefore $V \subset \mathrm{Span}(\mathcal{A})$.
\end{proof}
Applying Theorem~\ref{theo:affine2} and Lemma~\ref{lemma:span} for $\mathcal{A}=\mathcal{A}^{1}(\hilbert_S)$ one obtains Theorem~\ref{theo:affine1} as a corollary. Furthermore, when Lemma~\ref{lemma:span} applies one can always uniquely extend an affine map 
\begin{equation}
\begin{aligned}
f : \mathcal{A} &\longrightarrow \mathcal{A}'  \\
 a &\longmapsto \tilde{f}(a) + \alpha
\end{aligned}
\end{equation}
with $ \tilde{f}$ linear and $\alpha \in \mathcal{A}'$, from the affine subspace to the full vector space simply by the construction
\begin{equation}
\begin{aligned}
F : V &\longrightarrow \mathrm{Span}(\mathcal{A}')  \\
 v &\longmapsto \tilde{f}(v) +  {g(v)\over c} \alpha \; ,
\end{aligned}
\end{equation}
as was done in Sec.~\ref{sec:linear} using the functional $\Tr\{\cdot\}$ and $c=1$.

\section{Comparison with DSL formalism}\label{app:DSL}
The argument for the existence of the linear dynamical map described in Sec.~\ref{sec:linear} can also be translated into the language of the formalism for linear dynamical maps developed by Dominy, Shabani and Lidar \cite{Dominy2015,Dominy2016}. We briefly summarize the comparison here, along with the point where it slightly deviates. There is a special set of ingredients needed in the DSL approach; we list and examine them for our case in the following.

\emph{1. Convex set of admissible initial states:} as illustrated in Sec.~\ref{subsec:qubits}, different assumptions on the initial total system state lead to different dynamical maps. In the DLS approach one must specify the assumed structure of the total initial state by selecting a (convex) set of possible initial states. In our case, this would read
\begin{equation}
S_{SE} = \{ X\otimes\rho_E +\chi \;  |\;  X \in \mathcal{P}_E^{\chi}(\hilbert_S)\} \;.
\end{equation}
In our formalism, however, we extend this set to the set of operators
\begin{equation}
S^{\chi}_E = \{ X\otimes\rho_E +\chi \;  |\;  X \in \mathcal{A}^1(\hilbert_S)\} \;.
\end{equation}
Notice that the above, while convex as required from the DSL framework,  is not a set of states: $S^{\chi}_E \not\subset \mathcal{S}(\hilbert_{SE})$. 

\emph{2. Linear subspace generated by $S^{\chi}_E$:} the linear space $\mathcal{V}^{\chi}_E = \mathrm{Span}(S^{\chi}_E) \subset \mathcal{B}(\hilbert_{SE})$ is given by 
\begin{equation}
\mathcal{V}^{\chi}_E = \{ X \otimes \rho_E + \chi \Tr\{X\} \;| \;X \in \mathcal{B}(\hilbert_S) \}\; .
\end{equation}
The maps of partial trace over the environment and of unitary evolution upon any element of $S^{\chi}_E$ may then be extended by linearity to any element of $\mathcal{V}^{\chi}_E$.

\emph{3. $U$-consistency:} the set $S^{\chi}_E $ (and consequently also $\mathcal{V}^{\chi}_E$) is $U$-consistent, since whenever 
\begin{equation}
\Tr_E \{ X_1 \otimes \rho_E + \chi \} = \Tr_E \{ X_2 \otimes \rho_E + \chi \} \; , \; \mathrm{i.e.} \; X_1=X_2 \; ,
\end{equation}
it trivially follows that 
\begin{equation}
\Tr_E \{U( X_1 \otimes \rho_E + \chi )U^{\dag}\} =\Tr_E \{U( X_2 \otimes \rho_E + \chi )U^{\dag}\}
\end{equation}
for all unitary operators $U \in \mathrm{U}(\hilbert_{SE})$.
With this, one can extend any dynamical map (i.e.  dynamical maps associated to a unitary operator $U$) acting on $\Tr_E S^{\chi}_E = \mathcal{A}^1(\hilbert_S)$ to a linear map on $\Tr_E \mathcal{V}^{\chi}_E = \mathcal{B}(\hilbert_S)$.

\section{Inhomogeneity from projection operator technique}\label{app:TCLinhomog}
We argue here that the inhomogeneity $\mathcal{J}^{\chi}_t$ in \eqref{corrgenerator} is the same operator that is found through projection operator technique for the derivation of the time convolutionless (TCL) master equation. The argument is given by the fact that the assumptions are the same in both approaches and through the dependency on $\rho_E$ and $\chi$ for different terms of the master equations. 

Notice that in \eqref{corrgenerator}, the uncorrelated generator $\mathcal{L}_t$ depends only on the initial environmental state, and not on $\chi$.  Instead, the inhomogeneity part that was ``linearized'',  $\mathcal{J}^{\chi}_t $, is dependent on both $\rho_E$ and $\chi$. Notice also that this term is linear in $\chi$, so the dependence on $\chi$ cannot be taken away from any part of it. 

Let us now do the same study on the terms of the TCL master equation one derives from \cite{Shibata1977,Chaturvedi1979}. We set the time-independent projection as 
\begin{equation}
P \rho_{SE}(t) = \rho_S(t) \otimes \rho_E \; ,
\end{equation}
i.e. with the initial state of the environment as reference state. It follows that the projection onto the irrelevant part $Q = \mathbb{I} - P$ gives $Q\rho_{SE}(0) = \chi$. Nonetheless, both $P$ and $Q$, as maps, depend only on $\rho_E$.  Without additional assumptions, one gets the equation for the evolution of the relevant part
\begin{equation}\label{relevantme}
\partial_t P \rho_{SE} (t) = \mathcal{K}_t P \rho_{SE} (t) + \mathcal{I}_t \chi \; ,
\end{equation}
with the two linear superoperators
\begin{eqnarray}
\mathcal{K}_t &=& P \mathcal{L}^{SE}_t [ 1 - \Sigma_t]^{-1} P \; , \\
\mathcal{I}_t &=& P \mathcal{L}^{SE}_t [ 1 - \Sigma_t]^{-1} \mathcal{G}_{t,0}Q \; .
\end{eqnarray}
Here appears the generator $ \mathcal{L}^{SE}_t $ for the total system, which depends neither on $\rho_E$ nor on $\chi$. The other operators, instead, are
\begin{eqnarray}
\mathcal{G}_{t,s} &=&  \mathrm{T}_{\leftarrow} \mathrm{exp} \left[ \int_s^t d s' Q \mathcal{L}^{SE}_{s'} \right] \; ,\\
\Sigma_t &=& \int_0^t d s \mathcal{G}_{t,s} Q \mathcal{L}^{SE}_s P G_{t,s} \; \\
G_{t,s} &=&  \mathrm{T}_{\rightarrow} \mathrm{exp} \left[ - \int_s^t d s'  \mathcal{L}^{SE}_{s'} \right] \; 
\end{eqnarray}
where $\mathrm{T}_{\leftarrow}$ and $\mathrm{T}_{\rightarrow}$ indicate respectively the chronological and anti-chronological time ordering operator. The first two operators depend on $\rho_E$ only (through $P$ and $Q$), and the third on neither $\rho_E$ nor $\chi$. This implies that both the maps $\mathcal{K}_t$ and $\mathcal{I}_t$ depend on $\rho_E$ and not on $\chi$.  By applying the left-most projection $P$ in \eqref{relevantme} and tracing out the environment, one gets the exact TCL master equation for the reduced system. From the arguments above, it follows that this master equation is split into a linear superoperator (depending on $\rho_E$ only) acting on $\rho_S(t)$ and an inhomogeneity which depends also on $\rho_E$ as well as -- linearly -- on $\chi$. 

With these consideration we conclude that our term $\mathcal{J}^{\chi}_t$ is indeed the same inhomogeneity that one would obtain from projection operator technique by projecting onto a factorizing state with $\rho_E$ as the time-independent reference state. With this work we show that if one is indeed in possession of the TCL master equation after projecting onto $\rho_S(t) \otimes \rho_E$, the unique linear extension of this inhomogeneous master equation is given by simply multiplying the inhomogeneity by $\Tr\{ \rho_S(t)\}$. Similarly, by finding the eigenvalues and eigenvectors of the inhomogeneity, one can right away put the second term in dissipator form as described in Sec.~\ref{sec:me}.

\end{appendix}

\bibliography{biblio.bib}

\end{document}